\definecolor{cadmiumgreen}{rgb}{0.0, 0.42, 0.24}
\definecolor{PIcolor}{rgb}{0, 0, 1.0}
\definecolor{STSMCcolor}{rgb}{0.929, 0.427, 0.067}
\definecolor{NACcolor}{rgb}{0.0, 0.42, 0.24}
\definecolor{InIcolor}{rgb}{0.247, 0.475, 0.749}
\definecolor{limitColor}{rgb}{0.247, 0, 0}
\pgfplotsset{compat=newest}
\newtheorem{proposition}{Proposition}
\newcommand\T{\rule{0pt}{2.6ex}}       
\newcommand\B{\rule[-1.2ex]{0pt}{0pt}} 
\newacronym{FTC}{FTC}{Fault-Tolerant Control}
\newacronym{AC}{AC}{Alternating Current}
\newacronym{PMSM}{PMSM}{Permanent Magnet Synchronous Motor}
\newacronym{PMSMs}{PMSMs}{Permanent Magnet Synchronous Motors}
\newacronym{CNC}{CNC}{Computer Numerical Control}
\newacronym{SMO}{SMO}{Sliding Mode Observer}
\newacronym{SMC}{SMC}{Sliding Mode Control}
\newacronym{HOSMC}{HOSMC}{High-order Sliding Mode Control}
\newacronym{ISS}{ISS}{Input-to-State Stable}
\newacronym{LMI}{LMI}{Linear Matrix Inequality}
\newacronym{EKF}{EKF}{Extended Kalman Filter}
\newacronym{SISO}{SISO}{Single-Input Single-Output}
\newacronym{PID}{PID}{Proportional-Integral-Differential}
\newacronym{PI}{PI}{Proportional-Integral}
\newacronym{P}{P}{Proportional}
\newacronym{rpm}{rpm}{rounds pre minute}
\newacronym{STSMO}{STSMO}{Super-twisting Sliding Mode observer}
\newacronym{STSMC}{STS\-MC}{Super-twisting Sliding Mode controller}
\newacronym{PE}{PE}{Persistence of Excitation}
\newacronym{AO}{AO}{Adaptive Observer}
\newacronym{UGAS}{UGAS}{Uniformly Globally Asymptotically Stable}
\newacronym{BIBO}{BIBO}{Bounded Input-Bounded Output}
\newacronym{ULES}{ULES}{Uniformly Locally Exponentially Stable}
\newacronym{UB}{UB}{Uniformly Bounded}
\newacronym{UGB}{UGB}{Uniformly Globally Bounded}
\newacronym{MAEE}{MAEE}{Maximum Absolute Estimation Error}
\newacronym{VSC}{VSC}{Variable-Structure Control}
\newacronym{DC}{DC}{Direct Current}
\title{\LARGE \bf
Towards Prescribed Accuracy in Under-tuned Super-Twisting Sliding Mode Control Loops - Experimental Verification
}
\author{Dimitrios Papageorgiou$^{1}$
\thanks{$^{1}$Dimitrios Papageorgiou is with the Department of Electrical and Photonics Engineering, Technical University of Denmark, Elektrovej 326, 2800 Kgs Lyngby, Denmark
        {\tt\small dimpa@elektro.dtu.dk}}%
}
\begin{document}
\setlength\tabcolsep{3 pt}

\maketitle
\thispagestyle{empty}
\pagestyle{empty}

\begin{abstract}
	Obtaining prescribed accuracy bounds in super-twisting sliding mode control loops often falls short in terms of the applicability of the controller in high-performance systems. This is due to the fact that the selection of the controller gains that are derived from the conditions for finite-time convergence may be too restrictive in connection to actuator limitations and induced chatter. Previous work has shown that in case of periodic perturbations, there can be a systematic selection of much lower controller gains that guarantees boundedness of the closed-loop solutions within predetermined accuracy bounds. This study presents an experimental validation of these findings carried out on a commercial industrial motor system.
\end{abstract}

\section{INTRODUCTION}
	\gls{VSC} and in particular \gls{SMC} techniques \cite{emelyanov1967variable} constitute a quite attractive family of control strategies due to their inherent robustness against bounded and bounded-rate perturbations. The additional feature of ensuring finite-time convergence has established the use of such algorithms in many fields ranging from aviation and flight control \cite{shtessel2002tailless} to industrial motion control systems \cite{utkin2017sliding}, while their application also extends to diagnosis and fault-tolerant control schemes \cite{Edwards2000}. The need to alleviate the induced chatter in the control signal of the conventional first-order \gls{SMC} led to the development of second and higher-order \gls{SMC} laws \cite{bartolini661074,levant2005homogeneity}. Among these, the \gls{STSMC} introduced in \cite{levant1993sliding} and further generalised in \cite{haimovich2019a}, has become very popular due to its robust finite-time stabilisation properties and reduced chattering \cite{Levant2007576}.

The simplicity of the \gls{STSMC} with respect to its design and implementation has motivated a large number of studies on the systematic tuning of the controller. Relating the selection of controller gains to performance specifications is of particular interest since it facilitates easy commissioning of control systems. Despite the ``proportional-integral" structure of the controller, the tuning of the \gls{STSMC} can be challenging and has therefore received significant attention. Closed-form expressions for the controller gains were provided by \cite{Moreno2012} who used strict Lyapunov functions to prove finite-time stability of the \gls{STSMC} closed loop. Necessary and sufficient conditions for finite-time convergence were provided in \cite{behera2018new} and \cite{seeber2018necessary}, in which the authors employed geometric arguments relating to the majorant curve contraction requirement. The same conditions were also used to estimate the reaching time as shown in \cite{seeber2018a}. Tuning rules based on the properties of limit cycles appearing in linear systems with uncertain actuator dynamics were provided by \cite{pilloni2012a}, who used a describing functions framework. Adaptation constitutes an alternative approach to systematic commissioning of \gls{STSMC} loops. The authors in \cite{shtessel2012a} proposed an adaptive \gls{STSMC} design for an electropneumatic actuator. A certainty-equivalence adaptive \gls{STSMC} was presented in \cite{barth2015a}, where the adaptation was used to avoid unnecessary large controller gains. The authors in \cite{edwards2016adaptive} and \cite{edwards2016a} proposed a dual-layer adaptive \gls{STSMC} for guaranteeing finite-time convergence to the origin for both known and unknown perturbation bounds.

In the majority of the foregoing studies, the selection of the \gls{STSMC} gains was based on the conditions for finite-time convergence of the error variable to the origin, which require that the integral gain of the \gls{STSMC} be larger than the bound of the rate of the lumped perturbations affecting the system dynamics \cite{Moreno2012}. This can be limiting in terms of unrealistically large control signals, especially for application in systems with abrupt-changing perturbations such as Coulomb friction and backlash torques during motion reversals in mechanical systems. However, ensuring finite-time convergence is not necessary for obtaining high accuracy as was demonstrated in \cite{papageorgiou2019adaptive}, where an under-tuned \gls{STSMC} outperformed several conventional and advanced controllers in a single-axis positioning task. For systems affected by periodic perturbations it was proven \cite{papageorgiou2021behaviour} that under milder gains conditions compared to those of finite-time convergence, the solutions of the \gls{STSMC} closed-loop system converge to a limit cycle of the same period as the perturbation. The width of the limit cycle, which is a bound for the control error variable, can be modulated according to prescribed accuracy requirements. The authors provided guidelines for systematic tuning of the \gls{STSMC} and demonstrated the validity of the method in simulation.

This study pursues experimental verification of the theoretical results presented in \cite{papageorgiou2021behaviour}. Specifically, a motion control system comprising a commercial \gls{PMSM} is used as a test platform in a series of experiments proving the existence of stable limit cycles in under-tuned \gls{STSMC} loops with perturbation-depended frequency and amplitude characteristics. Moreover, the performance of the closed-loop system is assessed in connection to the tuning method. The remainder of the paper is organised as follows: Section \ref{sec:theory} provides an overview of the theoretical findings in \cite{papageorgiou2021behaviour}. Section \ref{sec:experimental} details the experimental campaign and discusses the results in connection to the theoretical predictions. Finally, conclusions are drawn in Section \ref{sec:conclusions} along with some remarks on future work.

\section{OVERVIEW OF THEORETICAL RESULTS} \label{sec:theory}
	\subsection{Preliminaries}
	The study concerns the class of nonlinear \gls{SISO} systems described by
\begin{equation}
	\dot{y} = h(t,y) + g(t,y)u_0 + d(t)
\end{equation}
where $y\in\mathbb{R}$ is available from measurements, the scalar functions $h(t,y),g(t,y)\in\mathcal{C}^1$ are bounded for bounded $y$, $g(t,y) \neq 0, \; \forall (t,y)\in[0,\infty)\times\mathbb{R}$ and $d(t)\in\mathcal{C}^2$ is a $T$-periodic unknown function. 
Such systems are frequently encountered in industrial applications that include repeated closed-curve tracking such as machine tool drive axes \cite{altintas2001manufacturing}, where $d(t)$ could be the effect of Coulomb friction and cogging torques on the drive motor and axis dynamics that cause contouring deformations \cite{gross2001electrical}. The control law
\begin{align}
	u_0 &= g^{-1}(t,y)\left[ -h(t,y) + u \right]\\
	u &= -k_1\vert y \vert^{\frac{1}{2}}\text{sgn}(y) - k_2\int_0^t \text{sgn}(y(\tau))d\tau \label{eq:control_law} \; ,
\end{align}
where \textnormal{sgn}$(\cdot)$ is the signum function, gives the following second-order closed-loop dynamics
\begin{equation}\label{eq:closed_loop_system}
	\underbrace{\begin{bmatrix}
		\dot{x}_1\\
		\dot{x}_2
		\end{bmatrix}}_{\boldsymbol{\dot{x}}} = \underbrace{\begin{bmatrix}
			-k_1\vert x_1 \vert^{\frac{1}{2}}\text{sgn}(x_1) + x_2\\
			-k_2\text{sgn}(x_1) + q(t)
		\end{bmatrix}}_{\boldsymbol{f}(t,\boldsymbol{x})} \; ,
\end{equation} 
with $x_1 \triangleq y$, $x_2 \triangleq -k_2\int_0^t \text{sgn}(y(\tau))d\tau + d(t)$ and $q(t) \triangleq \dot{d}(t)$. Since $d(t)\in\mathcal{C}^2$ and is $T$-periodic, it follows that its derivative is a continuous bounded $T$-periodic function. Let $\vert q(t) \vert \leq L$, where $L > 0$. It has been shown \cite{seeber2018necessary} that if
\begin{align}
	k_2 &> L \label{eq:finite_time_conditions_k_2}\\
	k_1 &\geq 1.8\sqrt{k_2 + L}, \label{eq:finite_time_conditions_k_1}
\end{align}
then the system in \eqref{eq:closed_loop_system} has a unique finite-time stable equilibrium point at the origin. The use of $\text{sgn}(y)$ implies infinitely fast switching of the control signal, which is not feasible in real-life control systems due to actuator limitations. In practice, the signum function is approximated by a continuous ``boundary layer" function such as the following \cite{llibre2015birth}
\begin{equation} \label{eq:phi_definition}
	\phi_{\delta}(q,\delta) \triangleq \begin{cases}
	1 &\text{ if } q \geq \delta\\
	\displaystyle\frac{q}{\delta} &\text{ if } -\delta < q < \delta\\
	-1 &\text{ if } q \leq -\delta
	\end{cases} \; ,
\end{equation}
where $\delta$ is the width of the boundary layer. By doing so, the discontinuous vector field $\boldsymbol{f}(t,\boldsymbol{x})$ in \eqref{eq:closed_loop_system} is \emph{regularised}, i.e. is approximated by a continuous vector field, which often simplifies the analysis of the closed-loop system. It was proven in \cite{papageorgiou2021behaviour} that such an approximation can be made with arbitrarily large accuracy by letting $\delta \rightarrow 0$. The practical implication of this approximation is that results relating to convergence to the origin now correspond to convergence to a neighbourhood of the origin or arbitrarily small size (depending on $\delta$). This is also what actually happens in real systems due to the effect of noise and other model inaccuracies. The regularised vector field will be considered in the entire subsequent analysis.
	
	\subsection{Existence and properties of limit cycles}
	Consider the regularised system $\boldsymbol{\dot{x}} = \boldsymbol{f}_{\delta}(t,\boldsymbol{x})$ with
\begin{align}\label{eq:regularised_vector_field}
	\boldsymbol{f}_{\delta}(t,\boldsymbol{x}) &\triangleq \begin{bmatrix}
		-k_1\vert x_1 \vert^{\frac{1}{2}}\phi_{\delta}(x_1,\delta) + x_2\\
		-k_2\phi_{\delta}(x_1,\delta) + q(t)
	\end{bmatrix} \; ,
\end{align}
where $k_1, k_2 > 0$, $k_2 < L$ and $\phi_{\delta}:\mathbb{R}\times (0,+\infty) \rightarrow [-1,1]$ defined in \eqref{eq:phi_definition}.
The following proposition states the conditions under which the boundedness of the regularised (and by extension of the real) system is ensured by means of convergence to a limit cycle even though the finite-time stability conditions do not hold ($k_2 < L$). Obviously, the size of the limit cycle along $x_1 = 0$, i.e. the bound on $x_1(t)$, is a metric for the closed-loop system accuracy.

\begin{proposition}[\cite{papageorgiou2021behaviour}]
	Consider the closed-loop system \eqref{eq:closed_loop_system} and its approximation associated with the regularisation \eqref{eq:regularised_vector_field}, where $q(t)$ is Lipschitz, $T$-periodic of sufficiently small period $T$ and $\vert q(t) \vert \leq L$. Then, $\exists \varepsilon_1 > 0$ with  $0 < T <\varepsilon_1$ such that under the conditions 
	\begin{align}
		k_2 &> \left\vert \frac{1}{T}\int_{0}^{T}q(t)dt \right\vert \label{eq:convergence_condition_k2}\\
		k_1 &\geq 1.8\sqrt{k_2 + \left\vert \frac{1}{T}\int_{0}^{T}q(t)dt \right\vert} \label{eq:convergence_condition_k1}
	\end{align}
	the trajectories of the regularised system $\boldsymbol{\dot{x}} = \boldsymbol{f}_{\delta}(t,\boldsymbol{x})$ converge to a limit cycle with period $T$.
\end{proposition}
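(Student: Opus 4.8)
The plan is to exploit the smallness of the period $T$ by an averaging argument that reduces the time-periodic dynamics to a small, $T$-periodic perturbation of an \emph{autonomous} super-twisting system. First I would split the perturbation into its mean and oscillating parts, $q(t) = \bar{q} + \tilde{q}(t)$ with $\bar{q} \triangleq \tfrac{1}{T}\int_0^T q(t)\,dt$ and $\int_0^T\tilde{q}(t)\,dt = 0$. Since $\tilde{q}$ is $T$-periodic, zero-mean and bounded by $2L$, its antiderivative $w(t)\triangleq\int_0^t\tilde{q}(s)\,ds$ is again $T$-periodic and of order $T$, namely $\sup_t|w(t)|\le 2LT$. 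The shifted coordinate $\tilde{x}_2\triangleq x_2 - w(t)$ removes the oscillating forcing from the $x_2$-equation and relocates it, as the additive term $w(t)$, into the $x_1$-equation:
\begin{align*}
\dot{x}_1 &= -k_1|x_1|^{\frac{1}{2}}\phi_\delta(x_1,\delta) + \tilde{x}_2 + w(t)\\
\dot{\tilde{x}}_2 &= -k_2\phi_\delta(x_1,\delta) + \bar{q}\; .
\end{align*}
Setting $w\equiv 0$ gives the \emph{frozen} system: an autonomous regularised super-twisting loop driven by the \emph{constant} effective perturbation $\bar{q}$, of which the full system is a $T$-periodic perturbation of uniform size $O(T)$.

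The decisive observation is that conditions \eqref{eq:convergence_condition_k2}--\eqref{eq:convergence_condition_k1} are precisely the finite-time stability conditions \eqref{eq:finite_time_conditions_k_2}--\eqref{eq:finite_time_conditions_k_1} of \cite{seeber2018necessary} with the bound $L$ replaced by $|\bar{q}|$. Hence the frozen system has a finite-time stable equilibrium, or, in the regularised case, a finite-time attractive $O(\delta)$-neighbourhood of the origin, in the sense already discussed in the regularisation paragraph. I would then fix a strict, homogeneous Lyapunov function $V$ of the type constructed in \cite{Moreno2012}, for which $\dot{V}\le -c\,V^{\frac{1}{2}}$ along the frozen trajectories outside the boundary layer.

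To transfer this stability to the full $T$-periodic system I would use the time-$T$ Poincaré map $P$ in the $(x_1,\tilde{x}_2)$ coordinates; since $w$ is $T$-periodic, a fixed point of $P$ corresponds to a $T$-periodic orbit of the original system. The additive term $w(t)$ contributes to $\dot{V}$ a term of order $\sup_t|w|=O(T)$, so $\dot{V}<0$ holds outside a ball of radius $O(T)+O(\delta)$; this yields a compact convex trapping region that $P$ carries into itself, and Brouwer's theorem then produces a fixed point, i.e. a limit cycle of period $T$. The strict decrease of $V$ (equivalently, contraction of $P$ in the $V$-metric for $T$ small) gives attractivity, so every trajectory entering the trapping region converges to the cycle. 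The threshold $\varepsilon_1$ is exactly the largest period for which the $O(T)$ forcing is dominated by the finite-time contraction of the frozen system, so that the stable equilibrium persists as an attracting limit cycle.

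The main obstacle is that the super-twisting vector field is non-Lipschitz at $x_1=0$ because of the $|x_1|^{\frac{1}{2}}$ term and the regularised sign, so classical smooth averaging and persistence-of-periodic-orbit theorems do not apply verbatim; in particular one must first record that the one-sided Lipschitz structure (the map $x_1\mapsto-k_1|x_1|^{\frac{1}{2}}\phi_\delta$ being decreasing) guarantees forward uniqueness and hence continuity of $P$. The technical core then lies in carrying out the averaging/total-stability estimate through the homogeneous Lyapunov function rather than through smoothness, and in keeping the two small parameters consistent: $\delta$ must be taken small relative to $T$ so that the $O(\delta)$ boundary-layer error does not obscure the $O(T)$ limit cycle, while $T<\varepsilon_1$ ensures the contraction outside the layer dominates the forcing. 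Establishing these estimates rigorously, rather than the fixed-point step itself, is where the real work is.
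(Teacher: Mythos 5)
Your strategy rests on the same decisive observation as the paper: conditions \eqref{eq:convergence_condition_k2}--\eqref{eq:convergence_condition_k1} are the finite-time conditions \eqref{eq:finite_time_conditions_k_2}--\eqref{eq:finite_time_conditions_k_1} with $L$ replaced by the mean $\vert\bar q\vert$, so the system obtained by averaging out the oscillating part of $q$ has a finite-time stable origin. Where you diverge is in how that stability is transferred back to the $T$-periodic system. The paper writes the dynamics in standard averaging form $\boldsymbol{\dot{x}} = \varepsilon\boldsymbol{g}(t,\boldsymbol{x})$ with $\varepsilon = T$, forms the averaged system \eqref{eq:averaged_system}, and invokes Theorem 4.1.1 of \cite{guckenheimer1983a} wholesale to obtain a unique $O(\varepsilon)$ periodic orbit of the same stability type for $\varepsilon<\varepsilon_1$. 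You instead prove persistence by hand: the splitting $q=\bar q+\tilde q$, the shift $\tilde x_2 = x_2 - \int_0^t\tilde q(s)\,ds$, a strict homogeneous Lyapunov function, a trapping region, and Brouwer applied to the time-$T$ map. Your route is longer but confronts the one point the paper's citation elides: the classical averaging theorem assumes smoothness and a hyperbolic equilibrium of the averaged system, neither of which the regularised super-twisting field provides at the origin, so a Lyapunov-based total-stability argument of exactly the kind you sketch is what a fully rigorous treatment requires. The paper's approach buys brevity and uniqueness of the orbit for free; yours buys self-containment and honesty about the hypotheses, at the cost of the homogeneity estimates and the $\delta$-versus-$T$ bookkeeping that you correctly identify as the real work.
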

\begin{proof}
	The regularised system can be written as
	\begin{equation}\label{eq:averaging_form}
		\boldsymbol{\dot{x}} = \varepsilon \frac{1}{T}\boldsymbol{f}_{\delta}(t,\boldsymbol{x}) \triangleq \varepsilon \boldsymbol{g}(t,\boldsymbol{x}) \; , \; \varepsilon = T \; ,
	\end{equation}
	where $\boldsymbol{g}(t,\boldsymbol{x})$ is Lipschitz continuous. The associated averaged system is written as 
	\begin{equation}
		\boldsymbol{\dot{\chi}} = \varepsilon \boldsymbol{\bar{g}}(\boldsymbol{\chi}), \; \boldsymbol{\chi} = \begin{bmatrix}
			\chi_1 & \chi_2
		\end{bmatrix}^T\in \mathbb{R}^2
	\end{equation}
	with $\varepsilon \boldsymbol{\bar{g}}(\boldsymbol{\chi}) = \displaystyle\frac{1}{T}\int_{0}^{T}\boldsymbol{f}_{\delta}(t,\boldsymbol{x}) dt$ and finally
	\begin{equation}\label{eq:averaged_system}
		\boldsymbol{\dot{\chi}} = \begin{bmatrix}
			-k_1\vert \chi_1 \vert^{\frac{1}{2}}\phi_{\delta}(\chi_1,\delta) + \chi_2\\
			-k_2\phi_{\delta}(\chi_1,\delta) + \displaystyle\frac{1}{T}\int_{0}^{T}q(t) dt
		\end{bmatrix} \; .
	\end{equation}
	Comparing \eqref{eq:averaged_system} to \eqref{eq:regularised_vector_field} reveals that if conditions \eqref{eq:convergence_condition_k2} and \eqref{eq:convergence_condition_k1} hold, then for sufficiently small $\delta$ ($\delta \rightarrow 0$) the origin is a finite-time stable equilibrium point of the averaged system. Then, by Theorem 4.1.1 in \cite{guckenheimer1983a}, there exists $\varepsilon_1 > 0$, such that $\forall \varepsilon\in(0,\varepsilon_1)$, the solutions of \eqref{eq:averaging_form} converge to a unique isolated $T$-periodic orbit $\gamma_{\varepsilon}(t) = O(\varepsilon)$ of same stability type.
\end{proof}
The constant $\varepsilon_1$ relates to the largest time scale $\frac{1}{\varepsilon_1}$ for which approximation of the system dynamics via averaging is practically valid. Conditions \eqref{eq:convergence_condition_k2} and \eqref{eq:convergence_condition_k1} are much less strict than those for finite-time convergence since in case of symmetric perturbation rates ($\int_{0}^{T}q(t)dt = 0$), boundedness of the solutions is ensured by merely selecting positive gains irrespectively of how large the perturbation rate may be.

The remaining analysis concerns the size of the limit cycle, i.e. the bound on $x_1(t)$, and its relation to the perturbation characteristics and controller gains. To proceed with this analysis, it is convenient to express the closed-loop dynamics in the phase space coordinates $w_1 \triangleq x_1$ and $w_2 \triangleq \dot{x}_1$ as
\begin{align}
	\dot{w}_1 &=  w_2 \label{eq:w_1_dot}\\
	\dot{w}_2 &=  -\frac{1}{2}k_1\vert w_1 \vert^{-\frac{1}{2}}w_2 - k_2\text{sgn}(w_1) + q(t) \label{eq:w_2_dot} \; .
\end{align}

\begin{proposition}[\cite{papageorgiou2021behaviour}]\label{prop:bound_period}
	After the trajectories of the closed-loop system converge to the limit cycle, the bound on the state $x_1$ varies proportionally to the perturbation bound $L$ and to the square of the perturbation period $T$.
\end{proposition}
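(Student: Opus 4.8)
The plan is to determine the amplitude of the stable $T$-periodic orbit guaranteed by Proposition 1 by working in the phase-space coordinates \eqref{eq:w_1_dot}--\eqref{eq:w_2_dot} and extracting the leading-order scaling of $w_1 = x_1$ in $L$ and $T$. First I would record a structural identity that isolates the singular damping term: since $\frac{d}{dt}\big(2|w_1|^{\frac12}\text{sgn}(w_1)\big) = |w_1|^{-\frac12}w_2$, the contribution $-\frac12 k_1|w_1|^{-\frac12}w_2$ integrates to zero over one full period of the limit cycle by $T$-periodicity. Consequently, to leading order the excursion of $w_1$ is governed by the forced double integrator $\ddot w_1 \approx q(t) - k_2\text{sgn}(w_1)$, whose right-hand side is bounded in magnitude by $k_2 + L = O(L)$. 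Integrating this twice over a time interval of length $O(T)$ (the duration of a half-swing of the cycle) produces a velocity excursion $w_2 = \dot w_1 = O(LT)$ and hence a position excursion $w_1 = O(LT^2)$, which is the asserted scaling.

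To turn this heuristic into a proof I would introduce the rescaled time $\tau = t/T$ together with the state rescaling $\eta_1 \triangleq w_1/(LT^2)$ and $\eta_2 \triangleq w_2/(LT)$, and write the perturbation in the separated form $q(t) = L\,\rho(t/T)$ with $\rho$ a fixed $1$-periodic profile of unit bound, so that only its amplitude $L$ and period $T$ are varied. A direct substitution transforms \eqref{eq:w_1_dot}--\eqref{eq:w_2_dot} into
\begin{align}
	\eta_1' &= \eta_2 \\
	\eta_2' &= -\tfrac12 \tfrac{k_1}{\sqrt{L}}\,|\eta_1|^{-\frac12}\eta_2 - \tfrac{k_2}{L}\,\text{sgn}(\eta_1) + \rho(\tau) \; ,
\end{align}
where $(\cdot)' = \frac{d}{d\tau}(\cdot)$. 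The crucial observation is that this rescaled system carries no explicit dependence on $T$. Hence the $T$-periodic orbit of the original system supplied by Proposition 1, once expressed in the $(\eta_1,\eta_2)$ variables, is a $1$-periodic orbit of a $T$-independent system, and its $\eta_1$-amplitude $\bar{A}$ is an $O(1)$ constant determined solely by $\rho$ and the normalised gains $k_1/\sqrt{L}$, $k_2/L$. Undoing the rescaling gives the bound on $x_1 = w_1 = LT^2\eta_1$ as $\|x_1\|_\infty = \bar{A}\,L\,T^2$, i.e. proportional to $L$ and to $T^2$.

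The main obstacle is controlling the singular damping term $-\frac12 k_1|w_1|^{-\frac12}w_2$, which blows up as $w_1 \to 0$ and therefore cannot be bounded pointwise through the sign transitions of the cycle; the integral identity above is precisely what lets it be handled in an averaged sense rather than pointwise, while the regularisation \eqref{eq:phi_definition} with $\delta \to 0$ keeps every expression well defined across the crossings. A secondary subtlety is that the $L$-proportionality is exact only to leading order: the normalised gains $k_1/\sqrt{L}$ and $k_2/L$ that fix $\bar{A}$ retain a dependence on $L$, so strict proportionality in $L$ holds in the under-tuned regime of interest (small $k_1$, $k_2 < L$) where the forced-double-integrator balance dominates and the residual gain dependence of $\bar{A}$ is a higher-order correction.
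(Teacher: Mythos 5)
Your rescaling argument is genuinely different from the paper's proof and, for the $T^2$ part of the claim, arguably cleaner. The paper works directly in the $(w_1,w_2)$ phase plane: on each sub-arc of the cycle lying in $Q_1$ with $w_2\le 0$ the damping term $-\tfrac{1}{2}k_1\vert w_1\vert^{-\frac{1}{2}}w_2$ is nonnegative and is simply dropped, giving the pointwise bound $\dot w_2\ge -(k_2+L)$; two integrations and a summation over the (odd number of) crossings of the positive $w_1$-axis, together with $\sum_i(t_{2i+1}-t_{2i})<nT$, $0<n\le\tfrac{1}{2}$, yield the explicit estimate $\max\vert x_1\vert<\tfrac{1}{2}(k_2+L)n^2T^2$, extended to the other quadrants by homogeneity. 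Your route instead exploits the exact scale invariance $(t,w_1,w_2)\mapsto(t/T,\,w_1/(LT^2),\,w_2/(LT))$, under which the closed loop becomes $T$-free; this gives an exact $LT^2$ scaling law rather than an inequality, but it does not produce the explicit constant $\tfrac{1}{2}(k_2+L)n^2$ that the paper needs later for the tuning rule \eqref{eq:k_2_estimate} and for the comparisons in Tables \ref{tab:constant_speed_scenarios} and \ref{tab:sinusoidal_speed_scenarios}. Note also that your opening integral identity (zero mean of the damping term over a period) is never actually used once you switch to the rescaling; the paper handles the same term by a sign argument, pointwise.

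Two gaps remain. First, your conclusion $\Vert x_1\Vert_\infty=\bar A\,L\,T^2$ requires that the $1$-periodic orbit of the rescaled ($T$-free) system obtained from the rescaled limit cycle be the \emph{same} orbit for every $T$. The averaging theorem only guarantees uniqueness of the periodic orbit of the original system inside an $O(T)$ neighbourhood of the origin, which in your $\eta_1$ coordinate is a ball of radius $O(1/(LT))$ --- it grows as $T\to 0$ and therefore does not by itself pin down a single $T$-independent orbit and amplitude $\bar A$. This can be repaired (the exact rescaling between the period-$T$ and period-$T'$ systems maps the $O(T)$ uniqueness neighbourhood into the $O(T')$ one when $T'<T$), but as written the step ``hence $\bar A$ is an $O(1)$ constant'' is asserted, not proved. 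Second, your argument needs the separable form $q(t)=L\rho(t/T)$ with a fixed unit-bound profile $\rho$, i.e.\ it compares perturbations of the same shape as $L$ and $T$ vary, whereas the paper's bound holds uniformly over all $T$-periodic $q$ with $\vert q\vert\le L$. You correctly flag that the $L$-proportionality is only leading-order since $\bar A$ depends on $k_1/\sqrt{L}$ and $k_2/L$; to be fair, the paper's own bound is affine in $L$ (proportional to $k_2+L$), so the proposition's wording is loose on this point in either treatment.
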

\begin{figure}[t]
	\begin{center}
		\includegraphics[width = 0.425\textwidth]{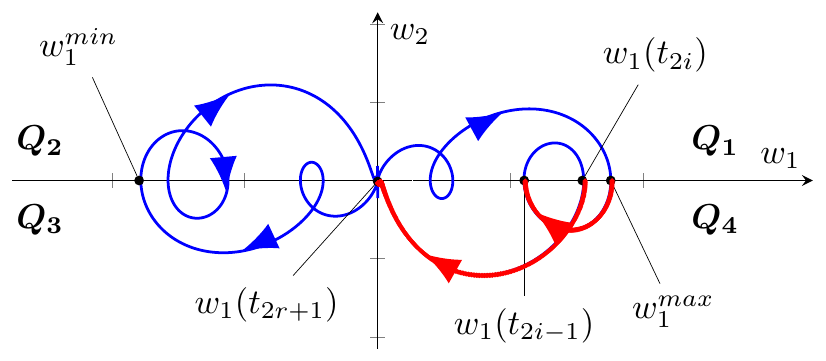}
		\vspace{-9pt}
		\caption{One full period of the limit cycle.}\label{fig:onePeriod}
	\end{center}
\vspace{-10pt}
\end{figure}
\begin{proof}
	Le $Q_j, \; j = 1,\dotsc, 4$ be the four quadrants of the phase space and consider one period of the limit cycle as shown in Figure \ref{fig:onePeriod}. For $t \geq t_0$ assume that the trajectories $\boldsymbol{w}(t)$ intersect with semi-axis $w_1 \geq 0$ at $2r + 1$ points, $r\in\mathbb{N}$ starting from $w_1^{max} \triangleq w_1(t_0)$, which is the maximum value of $w_1(t)$. Since the trajectories cannot cross from $Q_1$ to $Q_2$ (due to increasing $w_1$), each trajectory segment that lies in $Q_4$ starting at an intersection point will have to either cross the semi-axis $w_1 \geq 0$ twice (one while crossing to $Q_1$ and one right after while crossing to $Q_4$) or cross the vertical axis towards $Q_3$. In both cases, there will always be an odd number of intersections with the semi-axis $w_1 \geq 0$. Let these intersections occur at time instances $t_{2i}$ (from $Q_1$ to $Q_4$) and $t_{2i+1}$ (from $Q_4$ to $Q_1$), $i\in\mathcal{I} \triangleq \{0,\dotsc,r\}$ with $\boldsymbol{w}(t)$ crossing from $Q_4$ to $Q_3$ at $t = t_{2r+1}$. In each time interval $[t_{2i},t_{2i+1}]$, where $w_1(t) > 0, w_2(t) \leq 0$ holds (red lines):
	\begin{align}
		\dot{w}_2(t) &\geq -(k_2 + L) \Rightarrow w_2(t) \geq -(k_2 + L)(t - t_{2i}), \label{eq:w_2_t_3_bound}
	\end{align}
	$\forall t\in(t_{2i},t_{2i+1}]$ since $w_2(t_{2i}) = 0$. This leads to
	\begin{align*}
		\int_{t_{2i}}^{t_{2i+1}}w_2(t)dt &> -\frac{1}{2}(k_2 + L)(t_{2i+1} - t_{2i})^2, \; i\in\mathcal{I}.
	\end{align*}
	Since $w_1(t_{2r+1}) = 0$ and $w_2(t) \geq 0, \; \forall t\in[t_{2i-1},t_{2i}], \; i\in\mathcal{I}-\{0\}$, it follows that
	\begin{align*}
		&w_1^{max} = -\left( \sum\limits_{i=0}^{r}\int_{t_{2i}}^{t_{2i+1}}w_2(t)dt + \sum\limits_{i=1}^{r}\int_{t_{2i-1}}^{t_{2i}}w_2(t)dt \right)\\
		&\leq \frac{1}{2}(k_2 + L)\sum\limits_{i=0}^{r}(t_{2i+1} - t_{2i})^2\\
		&< \frac{1}{2}(k_2 + L)\left( \sum\limits_{i=0}^{r}(t_{2i+1} - t_{2i}) \right)^2 < \frac{1}{2}(k_2 + L)n^2T^2
	\end{align*}
	given that $t_{2i+1} > t_{2i}, \; \forall i\in\mathcal{I}$ and $\sum\limits_{i=0}^{r}(t_{2i+1} - t_{2i}) < t_{2r+1} - t_0 \leq nT, \; 0 < n \leq \frac{1}{2}$. Due to the homogeneity of the \gls{STSMC} closed-loop system \cite{seeber2018necessary}, the same analysis in $Q_2, Q_3$ gives a similar result for the minimum value that $x_1$ assumes, which finally leads to
	\begin{equation}\label{eq:w_1_bound_1}
		\max\limits_{x(t)\in\gamma_{\varepsilon}(t)} \vert x_1(t) \vert < \frac{1}{2}(k_2 + L)n^2T^2 \; , \; 0 < n \leq \frac{1}{2} \; .
	\end{equation}
\end{proof}
The result in inequality \eqref{eq:w_1_bound_1} implies that faster perturbations have less effect on the accuracy bounds since they are ``better averaged", whereas, as expected, larger perturbations compromise accuracy. Moreover, from \eqref{eq:w_2_t_3_bound} it can be seen that too large value for the integral gain $k_2$ will result in faster changes in $x_1$, i.e. in an increase of the induced chatter.
	
	\subsection{Tuning based on accuracy specifications}
	The description of the closed-loop dynamics in phase coordinates introduced in the previous section allows for a straightforward expression of the bound on $w_1(t) = x_1(t)$ as a function of the controller gains and, conversely, for an systematic tuning of the \gls{STSMC} given a accuracy specification.

Consider again a full period of the limit cycle in Figure \ref{fig:onePeriod} restricted in $Q_1$ and the time interval $\mathcal{T} \triangleq [t_0,t_m]$, where $t_0$ is the time when the trajectories first enter $Q_1$ from $Q_2$, $w_1(t_m) = w_1^{max} \geq w_1(t), \; \forall t\in\mathcal{T}$ and the time instant $t^*\in\mathcal{T}$ such that $w_2(t^*) \geq w_2*(t), \; \forall t\in\mathcal{T}$. Integrating Equation \eqref{eq:w_2_dot} over the interval $[t_0,t_m]$, where $w_1(0) = 0, \; w_2(0) > 0$, $w_1(t_m) = w_1^{max}, \; w_2(t_m) = 0$ and $\dot{w}_2(t^*) = 0$ leads to
\begin{align}
	&\int_{0}^{t_m}\dot{w}_2(t)dt = \int_{0}^{t_m}(q(t) - k_2)dt - \int_{0}^{t_m} k_1\frac{\dot{w}_1(t)}{2\sqrt{w_1(t)}}dt \nonumber\\
	&\Rightarrow -w_2(0) \leq -k_1\sqrt{w_1^{max}} + (L - k_2)t_m \Rightarrow \nonumber\\
	&\sqrt{w_1^{max}} \leq \frac{w_2(0) + (L - k_2)t_m}{k_1} \; . \label{eq:full_int}
\end{align}
Evaluating Equation \eqref{eq:w_2_dot} at $t = t^*$ gives
\begin{align} 
	w_2(t^*) &= 2\frac{q(t^*) - k_2}{k_1}\sqrt{w_1(t^*)} \; . \label{eq:max_w}	
\end{align}
Since $w_2(t^*) \geq w_2(t), \; \forall t\in [t_0,t_m]$, Equation \eqref{eq:max_w} yields
\begin{align*}
	w_2(0) &\leq w_2(t^*) \leq \frac{2(L - k_2)}{k_1}\sqrt{w_1^{max}}
\end{align*}
and from \eqref{eq:full_int} one obtains
\begin{align}
	&\sqrt{w_1^{max}} \leq \frac{\frac{2(L - k_2)}{k_1}\sqrt{w_1^{max}} + (L - k_2)t_m}{k_1} \Rightarrow \nonumber\\
	&\frac{k_1^2 - 2(L - k_2)}{k_1}\sqrt{w_1^{max}} \leq k_1(L - k_2)nT, \; 0 < n \leq \frac{1}{2} \; , \label{eq:ineq_2}
\end{align}
since $0 < t_m \leq \frac{T}{2}$. Finally, if $k_1$ is selected such that
\begin{equation}\label{eq:condition_k_1}
	k_1 > \sqrt{2(L - k_2)}
\end{equation}
then a not overly conservative bound for $w_1^{max}$ is given by
\begin{equation}\label{eq:w_1_bound}
	w_1^{max} \leq \frac{k_1^4(L - k_2)^2 n^2T^2}{\left[ k_1^2 - 2(L - k_2) \right]^2} \triangleq W_1(k_1,k_2) \; .
\end{equation}
Given an accuracy specification $\vert x_1(t) \vert \leq \eta$, $k_1$ and $k_2$ can be obtained by means of numerical optimisation or by selecting one gain and solving for the other. This is often the case in electromechanical systems, where $k_1$ represents forces and currents delivered by the actuator and it is desired to keep the level of actuation below some rated values. In such case, fixing $k_1$ allows for calculating $k_2$ from \eqref{eq:w_1_bound}:
\begin{equation} \label{eq:k_2_estimate}
	k_2 \geq L - \frac{\sqrt{\eta}k_1^2}{2\sqrt{\eta} + k_1 nT}
\end{equation}

\section{EXPERIMENTAL VERIFICATION} \label{sec:experimental}
	This section presents the physical system and scenarios used for validating the theoretical findings of the previous sections and discusses the obtained results. The experimental setup comprised a commercial Siemens 1FT7042-5AF70 \gls{PMSM} equipped with a SINAMICS S120 drive converter with 11-bit IC22DQ incremental angle encoder.
	The test scenarios covered two regimes of motion: constant speed and sinusoidal velocity profile. The total perturbation $d(t)$ acting on the motor is the sum of friction and cogging torques given by \cite{papageorgiou2020online,lukaniszyn2004optimization}:
	\begin{align}
		d(t) &= T_C \frac{2}{\pi}\arctan(\alpha\omega) + \beta\omega + \sum\limits_{i = 1}^{N}F_i\sin(\theta + \psi_i)\\
		\dot{d}(t) &= \left[ \frac{2T_C \alpha}{\pi\left( 1 + \alpha^2\omega^2 \right)} + \beta \right]\dot{\omega} + \omega\sum\limits_{i = 1}^{N}F_i\cos(\theta + \psi_i) \label{eq:disturbance_derivative}
	\end{align}
	where $\omega,\theta$ are the angular velocity and position of the motor, $T_C,\beta$ are the Coulomb and viscous friction coefficients, $\alpha >> 1$ is the steepness factor of the Coulomb friction model for approximating the signum function and $F_i, \psi_i$ are the amplitude and phase offset of the cogging torque component related to the $i^{\text{th}}$ harmonic. In all the experiments the signal of interest was the error $e \triangleq \omega - \omega_r$ and $d,\dot{d}$ were estimated via robust differentiation of the measured velocity and control input $u$ as $d = J\dot{\omega} - u$, where $J$ is the motor inertia.

	\subsection{Speed regulation at constant set-point}
	\begin{table}[t]
	\caption{Real and estimated error bound in constant speed regime.}\vspace{-10pt}
	\label{tab:constant_speed_scenarios}
	\begin{center}
		\begin{tabular}{l|cccccc}\hline
	$\omega_r$ (rad/s)	& 12 & 13 & 14 & 15 & 16 & 17\T\B \\\hline
	$\max\vert e(t) \vert$	& 0.182  &  0.170  &  0.161  &  0.156  &  0.157  &  0.128\T\B \\
	$w_1^{max}$	& 0.565  &  0.482  &  0.415  &  0.362  &  0.318  &  0.282\T\B \\
	\specialrule{.2em}{.1em}{-1.2em}\\
	$\omega_r$ (rad/s)	& 18 & 19 & 20 & 21 & 22 & 23\T\B \\\hline
	$\max\vert e(t) \vert$	& 0.141  &  0.129 & 0.115  &  0.114 &  0.099  &  0.112\T\B \\
	$w_1^{max}$	& 0.251  &  0.226 & 0.204  &  0.185  &  0.168  &  0.154\T\B \\\hline
\end{tabular}

	\end{center}
\end{table}
\begin{table}[t]
\caption{Real and estimated error bound for sinusoidal velocity.}\vspace{-10pt}
\label{tab:sinusoidal_speed_scenarios}
\begin{center}
	\begin{tabular}{l|cccccccc}\hline
	$f$ (Hz)	& 1 & 1.5 & 2 & 2.5 & 3 & 3.5 & 4 & 5\T\B \\\hline
	$\max\vert e(t) \vert$	& 0.174  &  0.143  &  0.127  &  0.117  &  0.126  &  0.123  &  0.119  &  0.11\T\B \\
	$w_1^{max}$	& 3.063  &  1.361  &  0.766  &  0.49  &  0.34  &  0.25  &  0.191  &  0.123\T\B \\
	\specialrule{.2em}{.1em}{-1.2em}\\\hline
\end{tabular}

\end{center}
\end{table}

\begin{figure}[t]
\begin{center}
	\includegraphics[width = 0.425\textwidth]{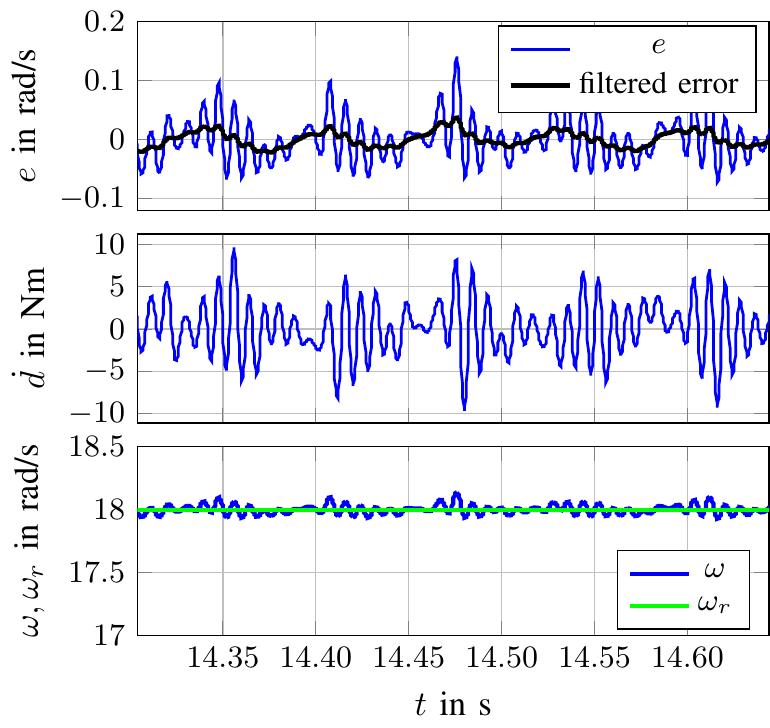}
	\caption{Error, disturbance rate and motor velocity for reference $\omega_r = 18$ rad/s. The periodicity of the error matches this of the perturbation rate.}\label{fig:signals_constant_speed}
\end{center}
\end{figure}

\begin{figure}[bp]
\begin{center}
	\includegraphics[width = 0.425\textwidth]{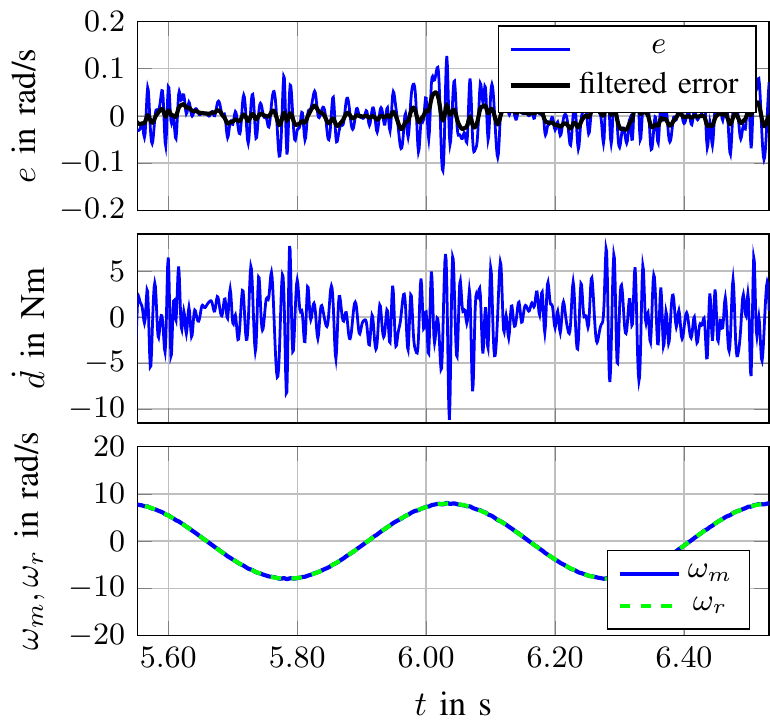}
	\caption{Error, disturbance rate and motor velocity for sinusoidal reference at 2 Hz. The periodicity of the error matches this of the perturbation rate.}\label{fig:signals_sinusoidal}
\end{center}
\end{figure}
While the motor operates at almost constant angular velocity $\omega_r$, then $\dot{\omega} \approxeq 0$ and from \eqref{eq:disturbance_derivative} one gets $\vert \dot{d}(t) \vert \leq \left\vert \omega_r \sum\limits_{i = 1}^{N}F_i \right\vert$ $\triangleq L$. Moreover, the period of $d(t)$ is approximately given by $T = 2\pi/\omega_r$. A total number of 12 experiments with constant speed set-points $\omega_r\in\{ 12, 13, \dotsc, 23 \}$ rad/s were carried out to assess the validity of the theoretical results. It was not possible to keep constant perturbation rate amplitude for all the experiments since both $\omega_r$ and the harmonics amplitudes $F_i$ were varying at different frequencies of operation. It was, however, observed that $L$ ranged from approximately 8 to 11 Nm/s and never exceeded 12 Nm/s, whereas its mean value over a single period was approximately 0. The finite-time convergence gains were calculated from \eqref{eq:finite_time_conditions_k_2},\eqref{eq:finite_time_conditions_k_1} as $\bar{k}_1 = 9.1$ and $\bar{k}_2 = 13.2$. The gains applied to the system were selected as $k_1 =0.9$ and $k_2 = 11.65 < L$ according to \eqref{eq:k_2_estimate} for $\eta = 0.2$ rad/s. Figure \ref{fig:limit_cycles_constant_speed} illustrates the phase plots for all the experiments with constant speed. As it can be seen, the closed-loop trajectories converge to a limit cycle. Figure \ref{fig:signals_constant_speed} that shows the time responses of $e,\dot{d}$ and $\omega$ during a full period for $\omega_r = 18$ rad/s, reveals that the signals have the same periodicity as predicted from the theoretical analysis. For increasing perturbation frequency, the actual error bound quadratically decreased as also shown in the bottom Figure \ref{fig:w_1_T}. Finally, Table \ref{tab:constant_speed_scenarios} shows a comparison between the real error bound and the one estimated based on \eqref{eq:w_1_bound_1} with $n = 0.5$.

\begin{figure}[bp]
	\begin{center}
		\includegraphics[width = 0.425\textwidth]{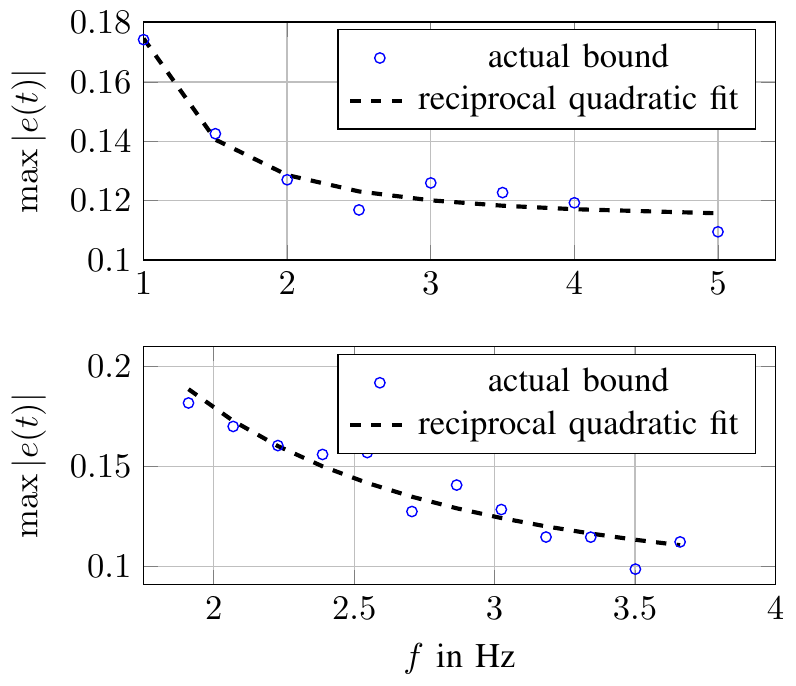}
		\caption{Velocity error bound as a function of the perturbation frequency $f$ for sinusoidal velocity reference (top) and constant velocity reference (bottom). In both cases, the error shows a quadratic decrease with respect to the frequency.}\label{fig:w_1_T}
	\end{center}
\end{figure}

\begin{figure*}[t]
	\begin{center}
		\includegraphics[width = 0.95\textwidth]{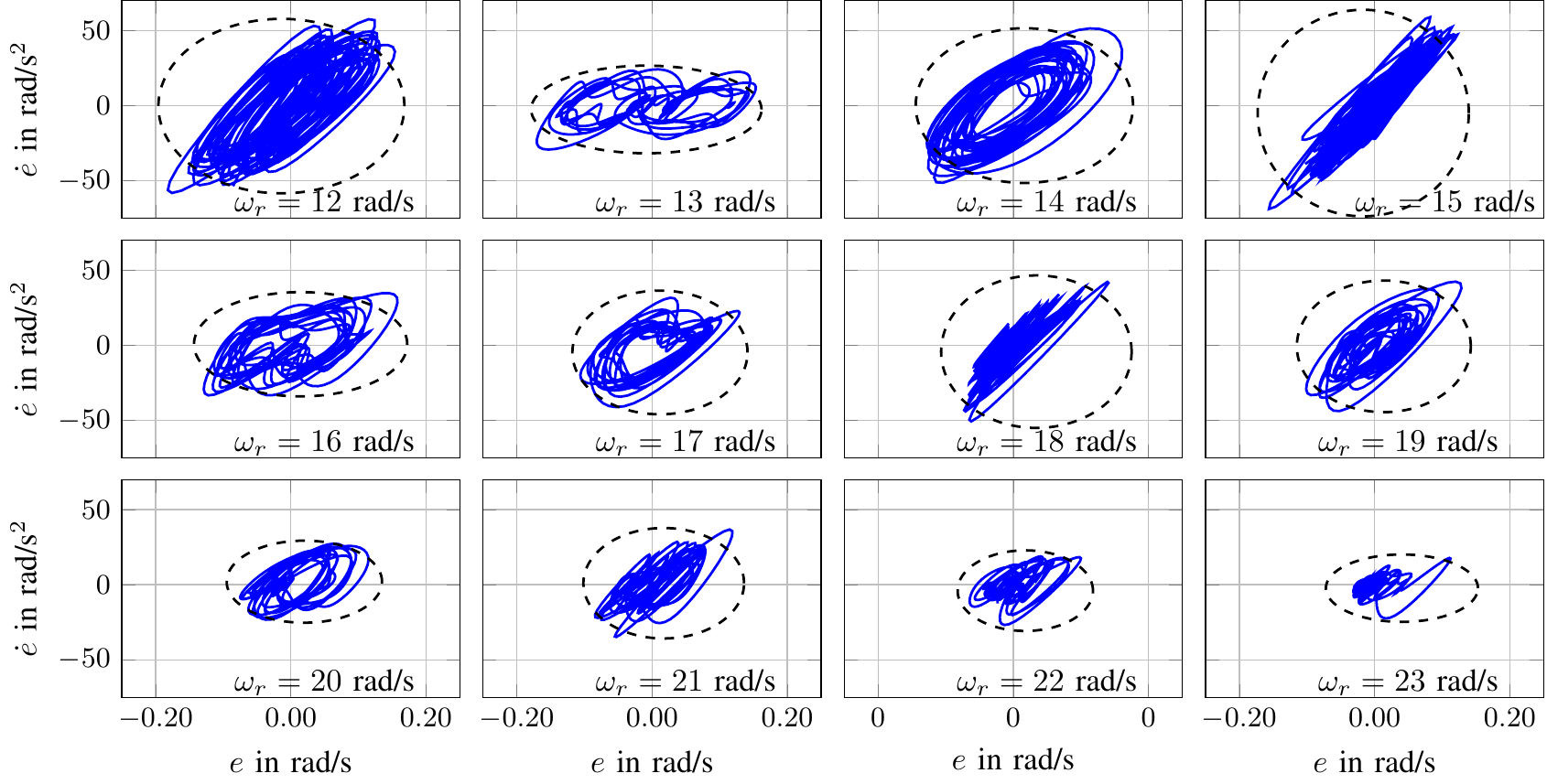}
		\caption{Phase plots for constant velocity reference $\omega_r$. The enclosing ellipse is drawn to highlight the decrease of the width of the limit cycles for increasing frequency of the perturbation.}\label{fig:limit_cycles_constant_speed}
	\end{center}
\end{figure*}

\begin{figure*}[htbp]
\begin{center}
	\includegraphics[width = 0.95\textwidth]{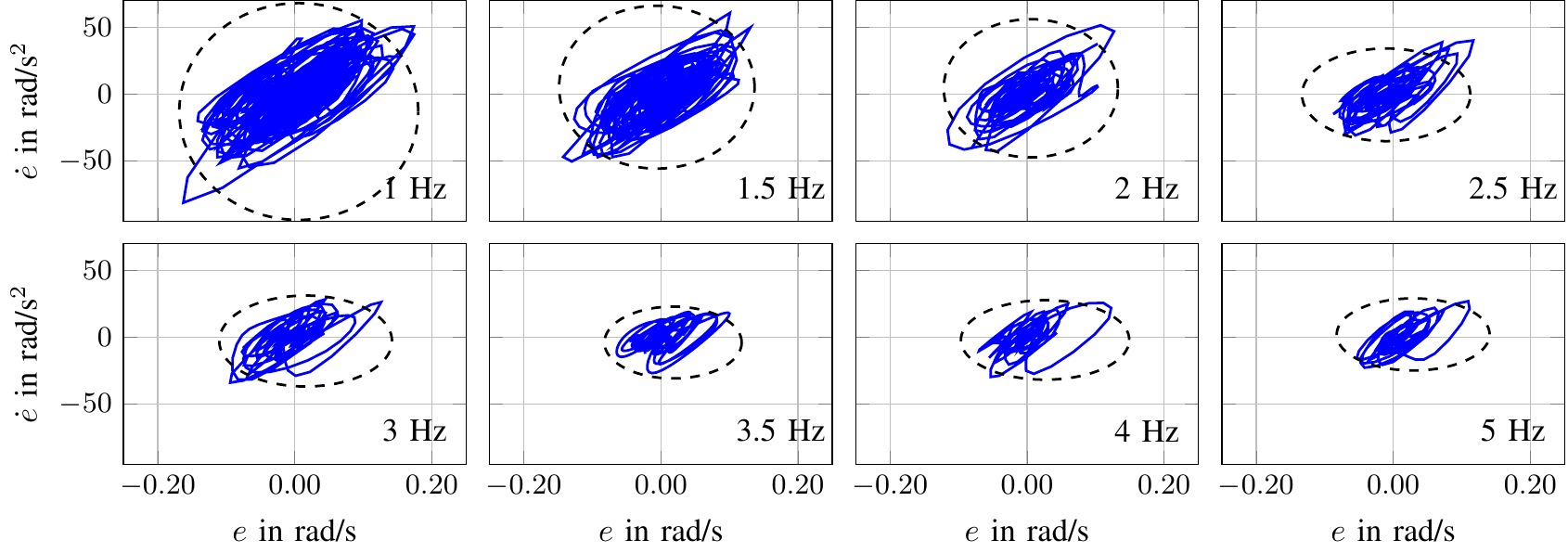}
	\caption{Phase plots for sinusoidal velocity reference $\omega_r(t) = \frac{100}{2\pi f_i}\cos(2\pi f_i t)$. The enclosing ellipse is drawn to highlight the decrease of the width of the limit cycles for increasing frequency of the perturbation.}\label{fig:limit_cycles_sinusoidal}
\end{center}
\end{figure*}
	
	\subsection{Tracking of sinusoidal velocity reference}
	Eight experiments with different sinusoidal velocity references at frequencies $f_i\in\{ 1, 1.5,\dotsc, 4, 5 \}$ Hz were carried out for testing the \gls{STSMC} closed-loop behaviour during motion reversals. Choosing $\omega_r(t) = \frac{100}{2\pi f_i}\cos(2\pi f_i t)$ ensured constant acceleration peak for all $f_i$. In this case the friction rate assumed its highest magnitude $100\left( \frac{2 T_C \alpha}{\pi} + \beta \right)$ for zero velocity, where the contribution from the cogging torques was zero. For maximum velocity, the friction contribution was very small, whereas that of the cogging torques was the dominant one. In all experiments $L$ was no larger than 20 Nm/s. The finite-time convergence gains were calculated as $\bar{k}_1 = 10.5$ and $\bar{k}_2 = 22$, while the actual gains were selected as $k_1 = 0.9$ and $k_2 = 19.65$ according to \eqref{eq:k_2_estimate} for the same accuracy specification $\eta = 0.2$ rad/s. Again, the velocity error trajectory was periodic with same periodicity as the perturbation rate as seen in Figure \ref{fig:signals_sinusoidal}. The associated phase plots in Figure \ref{fig:limit_cycles_sinusoidal} illustrating the limit cycles showed decreasing error bound for increasing perturbation frequency $f$. This is more clearly seen in the top Figure \ref{fig:w_1_T} where a reciprocal quadratic relation is observed between $f$ and the error bound. Similarly to the constant speed experiments, Table \ref{tab:sinusoidal_speed_scenarios} shows the comparison between the real error bound and the one estimated based on \eqref{eq:w_1_bound_1} with $n = 0.5$.

\section{CONCLUSIONS} \label{sec:conclusions}
	Experimental validation of the stability properties of under-tuned super-twisting sliding mode control loops under the effect of periodic perturbations was pursued in this paper. Based on a series of tests performed on a commercial industrial motor, it was demonstrated that the existence of a limit cycle of the same period as the perturbation can be guaranteed under milder controller gain conditions compared to those required for finite-time stability. The experimental results verified that the width of the limit cycle quadratically decreases for smaller perturbation periods. Moreover, the controller gains provided by the proposed tuning guidelines were successfully applied to the real system ensuring that the considered accuracy specifications were satisfied. Future work will focus on reducing the conservatism in estimating the accuracy bounds in terms of both the perturbation profile and in relation to estimating the period fraction $n$.




%

\bibliographystyle{IEEEtran}        
\bibliography{IEEEabrv,Bibliography/mybibl}

\end{document}